\def\shuffle{\sqcup\mathchoice{\mkern-7mu}{\mkern-7mu}{\mkern-3.2mu}{\mkern-3.8mu}\sqcup}
\newcommand\fs@spaceruled{\def\@fs@cfont{\bfseries}\let\@fs@capt\floatc@ruled
  \def\@fs@pre{\vspace{5\baselineskip}\hrule height.8pt depth0pt \kern2pt}%
  \def\@fs@post{\kern2pt\hrule\relax}%
  \def\@fs@mid{\kern2pt\hrule\kern2pt}%
  \let\@fs@iftopcapt\iftrue}
\newtheorem{theorem}{Theorem}[section]
\newtheorem{proposition}[theorem]{Proposition}
\newtheorem{corollary}[theorem]{Corollary}
\newtheorem{lemma}[theorem]{Lemma}
\newtheorem{definition}[theorem]{Definition}
\theoremstyle{remark}
\newtheorem{example}[theorem]{Example}
\def\word#1{{\color{blue}\mathbf{#1}}}
\date{\today}
\title{Optimal execution with rough path signatures\footnote{Opinions expressed in this paper are those of the authors, and do not necessarily reflect the view of JP Morgan.}}
\author[1]{Jasdeep Kalsi}
\author[1, 2]{Terry Lyons}
\author[1,2,3]{Imanol Perez Arribas}
\affil[1]{{\em Mathematical Institute, University of Oxford}}
\affil[2]{{\em The Alan Turing Institute, London}}
\affil[3]{{\em J.P. Morgan, London}}
\theoremstyle{definition}
\begin{document}
{\setstretch{1}
\maketitle
}

\begin{abstract}
We present a method for obtaining approximate solutions to the problem of optimal execution, based on a signature method. The framework is general, only requiring that the price process is a geometric rough path and the price impact function is a continuous function of the trading speed. Following an approximation of the optimisation problem, we are able to calculate an optimal solution for the trading speed in the space of linear functions on a truncation of the signature of the price process. We provide strong numerical evidence illustrating the accuracy and flexibility of the approach. Our numerical investigation both examines cases where exact solutions are known, demonstrating that the method accurately approximates these solutions, and models where exact solutions are not known. In the latter case, we obtain favourable comparisons with standard execution strategies.

\end{abstract}


\section{Introduction}

\subsection{Overview}

The problem of optimal execution has attracted much interest following the original work on the problem by Bertsimas and Lo in \cite{bertsimas} and Almgren and Chriss in \cite{almgren}. The aim is to model how one should send orders to the market in order to transition from holding one portfolio to another. Typically the case where an investor simply wishes to acquire/liquidate shares in a single asset is considered. There are two competing factors to the optimisation. Firstly, the investor has pressure to trade quickly. Trading more at later times would mean accepting more risk, as the future prices are uncertain. On the other hand, trading evenly across time also has its benefits due to the nature of market impact. The investor should consider how much liquidity there currently is at desirable prices -- placing a large order now could result in ``walking the book" and accepting unfavourable prices for a large portion of their trade. 

The key features in any optimal execution model are the dynamics of the price process at which the trader can execute her trades, $P_t$, and some definition of the notion of a good strategy for the trader. The process $P_t$ is a function of the history of the trading speed until time $t$, together with some additional driving processes. Typically, we have that $P_t$ is given by the sum of an underlying price process added to a price impact function. The price impact function depends on the history of the investor's trading speed, and it determines how much the price at which the trader can execute has changed as a consequence of that. Classical choices of price impact functions include \emph{temporary} versions, which depend only on the speed at which the trader wishes to trade at that time, \emph{permanent} versions, which depend in the accumulation of orders placed until time $t$, and \emph{transient} versions, where the effects of past trading speeds decay with time. Good strategies are usually defined in terms of some cost functional, which takes into account both the expected revenue for the investor when employing a strategy, and some measure of the risk associated with that strategy.

\subsection{Paper Outline}

The aim of this paper is to show how the signature method can be used to obtain approximate solutions to the problem of optimal execution. Our setting is very general, with price processes assumed to be geometric rough paths only and the price impact function allowed to depend on the entire history of the trading strategy, with only a mild continuity condition assumed. The flexibility of the framework is demonstrated in part by the broad range of existing models in the literature which fall within it. An instance of this is the classical optimal execution problem presented in \cite[Section 6.5]{carteabook}, in which the underlying price process is assumed to be a Brownian motion, with an $L^2$ penalty imposed based on the risk of holding inventory. More recent examples include the work of Lehalle and Neumann in \cite{eyal} and Cartea and Jaimungal in \cite{cartea2}. In \cite{eyal}, the authors prove results on the existence and uniqueness of an optimal trading strategy in the setting where trading signals are incorporated into the price dynamics. Similarly in \cite{cartea2}, the authors consider the role of microstructure in the problem by including order flow as a contributing factor permanently affecting the price. Our approach can also be adapted to handle models consisting of multiple correlated assets which are effected by trades in each other. Such a setting is presented in the article by Mastromatteo, Benzaquen, Eisler and Bouchaud, \cite{crossasset2}.

We begin the paper with a brief overview of rough paths in Section \ref{sec:rough paths}. Here, we define geometric rough paths and their signatures, and introduce the underlying algebraic structures required to perform calculations on the signatures. Following this, we introduce our framework in Section \ref{sec:framework}. This consists of specifying our assumptions on the price process and market impact in our model, defining the space of trading speeds in which we will look for strategies, and introducing the optimal control problem. Section \ref{sec:optimal execution} is dedicated to calculating approximate solutions to the control problem. We first reformulate the problem in terms of the signature, and then we approximate the optimal trading speed by a finite-dimensional, computationally tractable minimisation problem. In Section \ref{sec:extensions}, we provide examples of interesting extensions of the approach as it was presented in Sections \ref{sec:framework} and \ref{sec:optimal execution}, such as the multiple asset problem which appears in \cite{crossasset2}, and more exotic models where additional multi-dimensional noise is assumed to provide exogenous information about the price dynamics. Finally in Section \ref{sec:numerical experiments} and Section \ref{sec:market data}, we provide numerical evidence that the model performs well. Good approximations to the optimal strategies in the settings \cite[Section 6.5]{carteabook}, \cite{eyal} and \cite{cartea2} are obtained, and we also investigate the problem in the case where the underlying price process is a fractional Brownian motion. Moreover, we demonstrate in Section \ref{sec:market data} that our methodology can also be used on real market data.

\section{Rough paths preliminaries}\label{sec:rough paths}

Rough paths and signatures will play a key role in this paper. In this section we will introduce all the aspects of rough paths theory that will be used in the article. For a more detailed introduction to the theory of rough paths, the authors refer the reader to \cite{lyonsbook, frizvictoir}.

\subsection{Tensor algebra}\label{sec:tensor algebra}

A rough path is a path that takes value on a certain graded space, called the tensor algebra. This subsection will introduce these algebras, as well as another crucial space -- the dual space of the tensor algebra.

\begin{definition}[Extended tensor algebra]
Let $d\geq 1$. We denote by $T((\mathbb R^d))$ the extended tensor algebra over $\mathbb R^d$, which is defined by
$$T((\mathbb R^d)) := \{\mathbf a = (a_0, a_1, \ldots, a_n, \ldots)\;|\;a_n\in (\mathbb R^d)^{\otimes n}\}$$
where $\otimes$ denotes the tensor product. Given $\mathbf{a}=(a_i)_{i=0}^\infty, \mathbf{b}=(b_i)_{i=0}^{\infty}\in T((\mathbb R^d))$, define the sum $+$ and product $\otimes$ by \begin{align*}
&\mathbf a + \mathbf b := (a_i + b_i)_{i=0}^\infty,\\
&\mathbf a \otimes \mathbf b := \left (\sum_{k=0}^i a_k\otimes b_{i-k}\right )_{i=0}^\infty.
\end{align*}
We also define the action on $\mathbb R$ given by $\lambda \mathbf a := (\lambda a_i)_{i=0}^\infty$ for all $\lambda\in \mathbb R$, $\mathbf a \in T((\mathbb R^d))$.
\end{definition}

Similarly, we can define the tensor algebra and truncated tensor algebra as the space of all finite sequences and all sequences of a given length, respectively.

\begin{definition}
The tensor algebra over $\mathbb R^d$, denoted by $T(\mathbb R^d) \subset T((\mathbb R^d))$, is given by
$$T(\mathbb R^d) := \{\mathbf a = (a_n)_{n=0}^\infty\;|\;a_n\in (\mathbb R^d)^{\otimes n}\mbox{ and }\exists N\in \mathbb N \mbox{ such that }a_n=0\,\forall n\geq N\}.$$
Similarly, the truncated tensor algebra of order $n\in \mathbb N$ over $\mathbb R^d$ is defined by
$$T^{(N)}(\mathbb R^d) := \{\mathbf a = (a_n)_{n=0}^\infty\;|\;a_n\in (\mathbb R^d)^{\otimes n}\mbox{ and }a_n=0\,\forall n\geq N\}.$$
\end{definition}

Let $\{e_1, \ldots, e_d\}\subset \mathbb R^d$ be a basis for $\mathbb R^d$. This induces a dual basis $\{e_1^\ast, \ldots, e_d^\ast\}\subset (\mathbb R^d)^\ast$ for $(\mathbb R^d)^\ast$, where $(\mathbb R^d)^\ast$ denotes the dual space of $\mathbb R^d$ -- i.e. the space of all continuous linear functions $\mathbb R^d \to \mathbb R$. We may define a basis for $(\mathbb R^d)^{\otimes n}$ by:
$$\{e_{i_1} \otimes \ldots \otimes e_{i_n} \;\mid\;i_j\in \{1, \ldots, d\} \mbox{ for }j=1, \ldots, n\}.$$
Similarly, a basis of $((\mathbb R^d)^\ast)^{\otimes n}$ is defined by
$$\{e_{i_1}^\ast \otimes \ldots \otimes e_{i_n}^\ast \;\mid\;i_j\in \{1, \ldots, d\} \mbox{ for }j=1, \ldots, n\}.$$
This induces, in a natural way, a basis for $T((\mathbb R^d))$ and $T((\mathbb R^d)^\ast)$.

It is often convenient to think of $T((\mathbb R^d)^\ast)$ as a space of \textit{words}. Define the alphabet $\mathcal A_d := \{\word 1, \ldots, \word d\}$. Then, the basic element $e_{i_1}^\ast \otimes \ldots \otimes e_{i_n}^\ast$ can be identified with the word $\word{i_1}\ldots\word{i_n}$. Let $\mathcal W(\mathcal A_d)$ denote the space of all words (and their sums) with letters in the dictionary $\mathcal A_d$, i.e. the free $\mathbb R$-vector space generated by $\mathcal A_d$. Then, we have $T((\mathbb R^d)^\ast) \cong \mathcal W(\mathcal A_d)$. The empty word will be denoted by $\word \varnothing \in \mathcal W(\mathcal A_d)$.

\begin{example}Consider the following examples for $\mathbb R^2$.
\begin{enumerate}
\item Let $\mathbf a = 3 - e_2\otimes e_1\in T((\mathbb R^2))$. Then, $\langle \word \varnothing, \mathbf a \rangle = 3$.
\item Let $\mathbf a = 1 - 2e_1 + e_2\in T((\mathbb R^2))$, and set $\word w = \word{\varnothing} + \word{1}$. Then, $\langle \word w, \mathbf a \rangle = 1 - 2 = -1$.
\item Let $\mathbf a = e_1 \otimes e_2 - e_2 \otimes e_1\in T((\mathbb R^2))$, and set $\word w = \word{21} + \word{111}$. Then, $\langle \word w, \mathbf a \rangle = -1 + 0 = -1$.
\item Let $\mathbf a = 1 + e_1^{\otimes 3}$ and $\word w = 2 \cdot \word{111}$. Then, $\langle \word w , \mathbf a \rangle = 2 \cdot 1 = 2$.
\end{enumerate}
\end{example}
The space of words possesses two natural algebraic operations -- the sum and the concatenation. Let $\word w=\word{i_1}\ldots\word{i_n}, \word v=\word j_1\ldots \word{j_m}\in \mathcal W(\mathcal A_d)$ be two words. Their sum is the formal sum $\word w + \word v\in \mathcal W(\mathcal A_d)$. Their concatenation, on the other hand, is defined by $$\word{wv} := \word{i_1}\ldots\word{i_nj_1}\ldots\word{j_m}\in \mathcal W(\mathcal A_d).$$ These two operations induce analogous operations on $T((\mathbb R^d)^\ast)$, and with some abuse of notation we will even use concatenation on $\mathcal W(\mathcal A_d)$ and $T((\mathbb R^d)^\ast)$ interchangeably -- i.e. we will sometimes write $\ell \word w \in T((\mathbb R^d)^\ast)$ for $\ell \in T((\mathbb R^d)^\ast)$ and word $\word w\in \mathcal W(\mathcal A_d)$, by which we mean that we take the concatenation of the element in $\mathcal W(\mathcal A_d)$ associated to $\ell$ and the word $\word w$.

\begin{example} Take the alphabet $\mathcal A_4 = \{\word 1, \word 2, \word 3, \word 4\}$.

\begin{enumerate}
\item Set $\word w=\word{212}, \word{v}=\word{31}$. We have $\word {wv}=\word{21231}\in \mathcal W(\mathcal A_4)$.
\item We have $(\word{143} + \word{23})\word{1} = \word{1431} + \word{231}\in \mathcal W(\mathcal A_4)$.
\end{enumerate}
\end{example}
There is a third operation on words that will be useful in this paper: the \textit{shuffle product} $\shuffle\phantom{ }$. Intuitively, the shuffle product accounts for all the possible ways of riffle shuffling two decks of cards. The precise definition is given below.

\begin{definition}[Shuffle product]
The shuffle product $\shuffle\phantom{ }:\mathcal W(\mathcal A_d)\times \mathcal W(\mathcal A_d)\to \mathcal W(\mathcal A_d)$ is defined inductively by $$\word{ua}\shuffle \word{vb}=(\word u\shuffle \word{vb})\word{a}+(\word{ua}\shuffle \word v)\word b,$$ $$\word w \shuffle \word \varnothing = \word \varnothing \shuffle \word w = \word w$$ for all words $\word{u},\word{v}$ and letters $\word a, \word b\in \mathcal A_d$, which is then extended by bilinearity to $\mathcal W(\mathcal A_d)$. With some abuse of notation, the shuffle product on $T((\mathbb R^d)^\ast)$ induced by the shuffle product on words will also be denoted by $\shuffle\phantom{ }$.
\end{definition}
It follows from the definition of the shuffle product that $\phantom{ }\shuffle\phantom{ }$ is commutative, i.e. $f \shuffle g = g \shuffle f$ for all $f,g\in T((\mathbb R^d)^\ast)$.

\begin{example}
We have:

\begin{enumerate}
\item $\word{12} \shuffle \word 3 = \word{123}+\word{132}+\word{312}$.
\item $\word{12}\shuffle \word{23} = 2 \cdot \word{1224} + \word{1242} + \word{2124} + \word{2142} + \word{2412}$.
\end{enumerate}
\end{example}

\begin{definition}\label{def:shuffle polynomial}
Let $Q\in \mathbb R[x]$ be a polynomial on one variable. Write $Q(x) = a_0 + a_1x + a_2x^2  + \ldots + a_nx^n$. Then, $Q$ induces the map $Q^{\shuffle\phantom{ }}:T((\mathbb R^d)^\ast)\to T((\mathbb R^d)^\ast)$ given by $$Q^{\shuffle\phantom{ }}(\ell) := a_0\word{\varnothing} + a_1\ell + a_2 \ell^{\shuffle 2} + \ldots + a_n \ell^{\shuffle n} \quad \forall \ell\in T((\mathbb R^d)^\ast),$$ where $\ell^{\shuffle k} :=\underbrace{\ell\shuffle \ell \shuffle \ldots \shuffle \ell}_{k}$ for $k\in \mathbb N$.
\end{definition}

\subsection{Rough paths}

We will now define a crucial object in this paper: the signature of a path.

\begin{definition}[Signature of a path]\label{def:sig}
Let $0\leq s < t \leq T$. For a piecewise smooth path $X:[0, T]\to \mathbb R^d$, we define the signature of $X$ over $[s, t]$ by $$\mathbb X_{s, t}^{<\infty} := (1, \mathbb X^1_{s,t}, \ldots, \mathbb X^n_{s,t}, \ldots)\in T((\mathbb R^d))$$ where $$\mathbb X^n_{s,t} := \int_{s<u_1<\ldots<u_n<t} dX_{u_1}\otimes \ldots \otimes dX_{u_n}\in (\mathbb R^d)^{\otimes n}.$$ Similarly, we define the truncated signature of order $N\in \mathbb N$ by $$\mathbb X_{s, t}^{\leq N} := (1, \mathbb X^1_{s,t}, \ldots, \mathbb X^N_{s,t})\in T^{(N)}(\mathbb R^d).$$
If we refer to the signature of $X$, without referencing the interval over which the signature is taken, we will implicitly refer to $\mathbb X_{0, T}^{<\infty}$.
\end{definition}

\begin{example}
Throughout this paper, we will constantly work with linear functions on the signature. Therefore, it will be useful to see a few examples that will be used in later sections.

Let $X=(X^1, X^2)\in C^\infty([0, T]; \mathbb R^2)$ be a two-dimensional smooth path. Recall that in Section \ref{sec:tensor algebra} we introduced the notation of \textit{words} as linear functions on the tensor algebra. We have:

\begin{enumerate}
\item $\langle \word 2, \mathbb X_{0,T}^{<\infty}\rangle=\int_0^T dX_t^2 =X^2_T - X^2_0$.
\item $\langle \word \varnothing, \mathbb X_{0,T}^{<\infty}\rangle = 1$.
\item $\langle \word{21}, \mathbb X_{0,T}^{<\infty}\rangle = \int_0^T \int_0^t dX_s^2 dX_t^1=\int_0^T (X_t^2-X_0^2)dX_t^1$.
\item Let $\ell\in T((\mathbb R^2)^\ast)$. Then, $\langle \ell\word 1, \mathbb X_{0,T}^{<\infty}\rangle = \int_0^T \langle \ell, \mathbb X_{0,t}^{<\infty}\rangle dX_t^1$.
\end{enumerate}
\end{example}

\begin{definition}[Geometric $p$-rough paths]
Let $T>0$ and $p\geq 1$. Denote by $\lfloor p \rfloor$ the integer part of $p$. Let $\Delta_T := \{(s, t)\in [0, T]\times [0, T]\; |\; s\leq t\}$. A function $\mathbb X:\Delta_T\to T^{(\lfloor p \rfloor)}(\mathbb R^d)$ is said to be a geometric $p$-rough path if it is the limit (under the $p$-variation distance, \cite[Definition 1.5]{lyonsbook}) of signatures of order $\lfloor p \rfloor$ of piecewise smooth paths. The space of all geometric $p$-rough paths will be denoted by $G\Omega_p([0, T]; \mathbb R^d)$.
\end{definition}
Each $\mathbb X=(1, \mathbb X^1, \ldots, \mathbb X^{\lfloor p \rfloor})\in G\Omega_p([0, T]; \mathbb R^d)$ can be uniquely extended to a $N$-geometric rough path for any $N\geq p$ (\cite[Theorem 3.7]{lyonsbook}). Analogously to the smooth case, the full extension $\mathbb X^{<\infty} = (1, \mathbb X^1, \ldots, \mathbb X^N, \ldots)$ will be defined as the \textit{signature} of $\mathbb X$.

Many stochastic processes that are used in the literature are almost surely geometric rough paths. For example, the signature of a semimartingale, defined using Stratonovich integration, is almost surely a geometric $p$-rough path for any $p\in (2, 3)$ \cite{semimartingalesgeometric}. The signature of a fractional Brownian motion for Hurst parameter $H\geq 1/4$, defined almost surely, is also a geometric $p$-rough path for $p> 1/H$ (\cite{fractionalbmgeometric}). We will now state some properties of signatures that will be useful in this article.

\begin{lemma}[Shuffle product property, \cite{lyonsbook}]
Let $\mathbb X\in G\Omega_p([0, T]; \mathbb R^d)$ be a geometric $p$-rough path. Let $\ell_1, \ell_2\in T((\mathbb R^d)^\ast)$ be two linear functionals. Then,
\begin{equation}\label{eq:shuffle product property}
\langle \ell_1, \mathbb X_{0, T}^{<\infty}\rangle\langle \ell_2, \mathbb X_{0, T}^{<\infty}\rangle = \langle \ell_1\shuffle \ell_2, \mathbb X_{0, T}^{<\infty}\rangle\quad \forall \ell_1,\ell_2\in T((\mathbb R^d)^\ast).
\end{equation}
\end{lemma}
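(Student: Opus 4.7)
The plan is to first establish the identity for signatures of piecewise smooth paths by induction on word length using integration by parts, and then to extend to a general geometric $p$-rough path by density and continuity. By the bilinearity of both sides of \eqref{eq:shuffle product property} in $\ell_1$ and $\ell_2$, it suffices to prove the identity when $\ell_1 = \word u$ and $\ell_2 = \word v$ are words in $\mathcal W(\mathcal A_d)$.

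For a piecewise smooth path $X$, I would induct on the total length $n = |\word u| + |\word v|$. If either word is empty, the claim is immediate from $\langle \word \varnothing, \mathbb X^{<\infty}_{0,T}\rangle = 1$ and the convention $\word \varnothing \shuffle \word w = \word w$. For the inductive step, write $\word u = \word{u'a}$ and $\word v = \word{v'b}$, and use the identity $\langle \ell \word c, \mathbb X^{<\infty}_{0,T}\rangle = \int_0^T \langle \ell, \mathbb X^{<\infty}_{0,t}\rangle dX_t^c$ (item 4 of the Example following Definition \ref{def:sig}) to represent each factor as a one-dimensional integral. Applying the classical integration-by-parts formula to the product yields
\begin{align*}
\langle \word u, \mathbb X^{<\infty}_{0,T}\rangle \langle \word v, \mathbb X^{<\infty}_{0,T}\rangle &= \int_0^T \langle \word{u'a}, \mathbb X^{<\infty}_{0,t}\rangle \langle \word{v'}, \mathbb X^{<\infty}_{0,t}\rangle \, dX_t^b \\
&\quad + \int_0^T \langle \word{u'}, \mathbb X^{<\infty}_{0,t}\rangle \langle \word{v'b}, \mathbb X^{<\infty}_{0,t}\rangle \, dX_t^a.
\end{align*}
Each integrand pairs two signature values whose word lengths sum to $n-1$, so the inductive hypothesis collapses them into $\langle \word{u'a}\shuffle \word{v'}, \mathbb X^{<\infty}_{0,t}\rangle$ and $\langle \word{u'}\shuffle \word{v'b}, \mathbb X^{<\infty}_{0,t}\rangle$. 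One more application of item 4 of the Example expresses the sum as $\langle (\word{u'a}\shuffle \word{v'})\word b + (\word{u'}\shuffle \word{v'b})\word a, \mathbb X^{<\infty}_{0,T}\rangle$, which by the recursive definition of $\shuffle$ is precisely $\langle \word{u'a}\shuffle \word{v'b}, \mathbb X^{<\infty}_{0,T}\rangle$, closing the induction.

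To handle a general $\mathbb X \in G\Omega_p([0,T]; \mathbb R^d)$, I would exploit that $\mathbb X$ is by definition a $p$-variation limit of signatures of piecewise smooth paths, combined with the extension theorem \cite[Theorem 3.7]{lyonsbook} which lifts this convergence to every level of the signature. For fixed words $\word u, \word v$, both sides of \eqref{eq:shuffle product property} depend only on the levels of $\mathbb X^{<\infty}_{0,T}$ up to degree $|\word u| + |\word v|$, so they are continuous functions of $\mathbb X$ in the rough-path topology. Passing to the limit in the smooth identity then yields the result, and bilinearity extends it from words to arbitrary $\ell_1, \ell_2 \in T((\mathbb R^d)^\ast)$.

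The main obstacle in a fully rigorous writeup is the passage to the limit at signature levels strictly above $\lfloor p \rfloor$, which lie outside the data encoded by a $p$-rough path. This is precisely what the extension theorem is designed to handle, so no new analytic work is required beyond its invocation; the algebraic substance of the proof is confined to the smooth case, where the interplay between integration by parts and the recursive definition of $\shuffle$ does all of the work.
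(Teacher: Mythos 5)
The paper offers no proof of this lemma --- it is quoted directly from \cite{lyonsbook} --- so there is nothing internal to compare against; what you have written is a correct reconstruction of the standard argument. The reduction to words by bilinearity, the induction on $|\word u|+|\word v|$ via integration by parts for piecewise smooth paths (whose two boundary terms vanish and whose two integrals reproduce exactly the recursion $\word{ua}\shuffle\word{vb}=(\word u\shuffle\word{vb})\word a+(\word{ua}\shuffle\word v)\word b$ defining the shuffle), and the passage to general geometric $p$-rough paths via the extension theorem and continuity of each signature level in the $p$-variation topology are all sound and are essentially how the result is proved in the literature. Two small points to make explicit in a full writeup: the inductive hypothesis must be stated for every intermediate time $t\in[0,T]$, not just for the terminal time, since you invoke it inside the integrands (harmless, as the induction is on word length with the endpoint arbitrary); and in the limiting step one should note that the map sending a geometric $p$-rough path to any fixed level of its extended signature is continuous, which is part of the content of \cite[Theorem 3.7]{lyonsbook} that you cite. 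Neither is a gap in the mathematics, only in the bookkeeping.
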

The shuffle product will be extensively used throughout this paper. It guarantees that the product of two linear functions on the signature is another linear function on the signature, which is given explicitly in terms of the shuffle product.

The following lemma will also be useful in this paper. This result guarantees that the signature $\mathbb X_{0,T}^{<\infty}$ completely characterises $\mathbb X$ -- up to the so-called tree-like equivalences (see \cite[Definition 1.1]{horatio}).

\begin{lemma}[Uniqueness of signatures, \cite{horatio}]
Let $\mathbb X\in G\Omega_p([0,T];\mathbb R^d)$. The signature $\mathbb X_{0,T}^{<\infty}$ of $\mathbb X$ is unique up to tree-like equivalences (defined in \cite[Definition 1.1]{horatio}).
\end{lemma}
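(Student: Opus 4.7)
The plan is to prove the contrapositive: I will show that whenever $\mathbb X, \mathbb Y \in G\Omega_p([0,T]; \mathbb R^d)$ satisfy $\mathbb X_{0,T}^{<\infty} = \mathbb Y_{0,T}^{<\infty}$, the two paths must be tree-like equivalent. The first step is an algebraic reduction to the case of a path with trivial signature. By Chen's identity the signature is multiplicative under concatenation of rough paths, and reversing the time parametrisation inverts the signature inside the group of grouplike elements of $T((\mathbb R^d))$. Thus the concatenation $\mathbb Z := \mathbb X \star \overleftarrow{\mathbb Y}$ has $\mathbb Z_{0,T}^{<\infty} = (1, 0, 0, \ldots)$, and a direct unpacking of definitions shows that $\mathbb Z$ being tree-like is equivalent to $\mathbb X$ and $\mathbb Y$ being tree-like equivalent. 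The whole theorem therefore reduces to the single analytic statement that any geometric $p$-rough path with trivial signature is tree-like.

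For this analytic core the strategy is a reduction-and-development argument. First, excise from $\mathbb Z$ all sub-excursions which immediately retrace themselves, obtaining a \emph{reduced path} which still has trivial signature and by construction contains no obvious tree-like pieces. To show this reduced path must be constant, develop it into a negatively curved target -- for instance hyperbolic space or a non-compact semisimple Lie group -- by solving a rough differential equation driven by $\mathbb Z$. Triviality of the signature forces the holonomy of this development to be the identity; in negative curvature this in turn forces the developed curve itself to be constant, and then uniqueness of the rough differential equation pulls the triviality back to the reduced path. Unwinding the excision procedure then gives tree-likeness of $\mathbb Z$.

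The main obstacle is the analytic execution of this outline in the rough setting. In the bounded variation case treated by Hambly and Lyons, the hyperbolic development and the reduction procedure can be controlled using classical ODE theory and $1$-variation estimates; for geometric $p$-rough paths one instead needs existence, uniqueness and continuity of rough differential equations with values in an infinite-dimensional or non-linear target, together with quantitative control ensuring that the excision step does not inflate the $p$-variation and that the iterated reductions converge in the $p$-variation topology to a well-defined limit which is still a geometric $p$-rough path. Producing these estimates, and in particular handling the interaction between the transfinite reduction procedure and the low regularity of the driver, is the substantive technical difficulty; by comparison the algebraic reduction via Chen's identity and the qualitative geometric picture for the development are essentially routine.
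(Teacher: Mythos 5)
The paper itself offers no proof of this lemma: it is imported wholesale from \cite{horatio}, so there is no internal argument for you to match. Judged on its own terms, your proposal correctly identifies the \emph{architecture} of the known proof --- the Chen's-identity reduction of uniqueness to the statement that a geometric rough path with trivial signature is tree-like, followed by the Hambly--Lyons template of reduction plus development into negative curvature --- but what you have written is a roadmap rather than a proof, and the step you explicitly defer (``the substantive technical difficulty'') is the theorem itself.

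Two concrete problems. First, the mechanism you give for the development step is wrong even in the bounded-variation case: ``holonomy equal to the identity in negative curvature forces the developed curve to be constant'' is false, since a nonconstant loop in hyperbolic space can perfectly well return to its starting point and frame. What Hambly and Lyons actually prove is quantitative: for a \emph{reduced} path of length $L$ at unit speed, the development into hyperbolic space of curvature $-K$ displaces the base point by at least an amount tending to $L$ as $K\to\infty$, so the displacement can vanish for all $K$ only if $L=0$. Second, that estimate is anchored to the length $L$, which is infinite for a genuine geometric $p$-rough path with $p>1$; this is precisely why the rough case resisted proof for a decade and why \cite{horatio} does not follow your outline. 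There, tree-likeness is formulated via a height function on a real tree (built from the subintervals over which the restricted signature is trivial) rather than via an excision procedure, and the analytic core is a different non-degeneracy estimate for developments into non-compact groups that does not pass through the length of the path. So the gap you set aside is not a matter of supplying routine $p$-variation estimates for a fixed strategy; the strategy itself must change, and as written the proposal does not establish the lemma.
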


\begin{corollary}\label{cor:uniqueness}
Let $\mathbb X\in G\Omega_p([0,T];\mathbb R^d)$. If there exists a projection of $\mathbb X$ that is strictly monotone, then the signature $\mathbb X_{0,T}^{<\infty}$ determines $\mathbb X$ up to translations.
\end{corollary}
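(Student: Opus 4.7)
The plan is to invoke the uniqueness of signatures lemma and then use strict monotonicity to eliminate nontrivial tree-like excursions, thereby reducing tree-like equivalence to equality up to translation. First observe that by Definition \ref{def:sig} the signature depends only on the increments $dX_t$, so translations $\mathbb{X}\mapsto \mathbb{X}+c$ leave $\mathbb{X}_{0,T}^{<\infty}$ fixed; hence translations constitute an intrinsic ambiguity in recovering $\mathbb{X}$ from its signature. It therefore suffices to show that if $\mathbb{X},\mathbb{Y}\in G\Omega_p([0,T];\mathbb{R}^d)$ share a signature, $X_0=Y_0$, and some projection $\pi_i\mathbb{X}$ is strictly monotone, then $\mathbb{X}=\mathbb{Y}$ (up to reparametrization, which is already absorbed into tree-like equivalence).

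By the uniqueness lemma, $\mathbb{X}$ and $\mathbb{Y}$ are tree-like equivalent. Form the reversed concatenation $Z:=\mathbb{X}*\overleftarrow{\mathbb{Y}}$ on $[0,2T]$; by multiplicativity of the signature, $Z_{0,2T}^{<\infty}=\mathbf{1}$, so $Z$ is itself a tree-like rough path. The heart of the proof is to show that the strict monotonicity of $\pi_i\mathbb{X}$ forces this tree to collapse trivially. Project onto the $i$-th coordinate: $\pi_i Z$ is a real-valued tree-like path, strictly monotone on $[0,T]$ between $X_0^i$ and $X_T^i$, and equal to the reverse of $\pi_i\mathbb{Y}$ on $[T,2T]$. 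In a real-valued tree-like path, every value in the range must be visited in matched up/down pairs dictated by the underlying tree; strict monotonicity forces each value in $(X_0^i,X_T^i)$ to be hit exactly once on $[0,T]$, so the pairing yields a continuous, strictly increasing reparametrization $\sigma:[0,T]\to [0,T]$ with $\sigma(0)=0$, $\sigma(T)=T$, and $Z_s=Z_{2T-\sigma(s)}$ for every $s\in[0,T]$; equivalently $X_s=Y_{\sigma(s)}$. This identifies $\mathbb{X}$ and $\mathbb{Y}$ up to the reparametrization $\sigma$.

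The main obstacle is promoting this pairing argument from the level of the underlying paths to the full geometric rough path lift. This step is handled by the fact that tree-like equivalence of geometric $p$-rough paths (in the extension of the Hambly--Lyons framework cited above) is defined modulo reparametrization and is trivial whenever the underlying path is injective on each half; strict monotonicity of $\pi_i\mathbb{X}$ supplies precisely this injectivity. The tree-like equivalence class of $\mathbb{X}$ therefore collapses to the set of its translates, and the corollary follows.
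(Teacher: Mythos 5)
The paper offers no proof of this corollary at all: it is stated as an immediate consequence of the uniqueness-of-signatures lemma, and the appendix proves only Lemma \ref{lemma:density}. So there is nothing to compare against line by line, and your attempt must stand on its own. Your overall route is the standard and correct one: translations are an intrinsic ambiguity because the signature sees only increments; equality of signatures means $Z:=\mathbb{X}*\overleftarrow{\mathbb{Y}}$ is tree-like; and a strictly monotone projection should collapse the tree. The opening reduction and the closing statement of what needs to be shown are both fine.

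The gap is in the central pairing step, and it is genuine because the intermediate statement you reduce to is false as written. If only $\pi_i\mathbb{X}$ is assumed strictly monotone and nothing is assumed of $\mathbb{Y}$, the conclusion fails: take $\mathbb{Y}=\mathbb{X}*\gamma$ for any nontrivial tree-like loop $\gamma$; the signatures agree and $Y_0=X_0$, yet $\mathbb{Y}$ is not a translate or reparametrisation of $\mathbb{X}$. The corollary has to be read as injectivity of the signature map restricted to the class of paths with a strictly monotone projection (which is how the paper uses it: every element of $\widehat{\Omega}_T^p$ carries time as a monotone first coordinate), and your proof must invoke injectivity of the second half of $Z$ as well. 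Relatedly, your pairing argument is run at the level of the real-valued projection $\pi_i Z$, where ``tree-like'' carries no information --- every real-valued loop is tree-like, since a one-dimensional signature is determined by the total increment --- and where a value in $(X_0^i,X_T^i)$ may be hit many times on $[T,2T]$ if $\pi_i Y$ is not monotone; moreover, matching values of one coordinate does not by itself give $X_s=Y_{\sigma(s)}$ in the remaining coordinates. The argument should instead live in the $\mathbb{R}$-tree $\tau$ through which $Z$ factors, say $Z=\phi\circ\psi$ with $\psi(0)=\psi(2T)$: monotonicity of $\pi_i\mathbb{X}$ makes $X$, hence $\psi|_{[0,T]}$, injective, so $\psi|_{[0,T]}$ parametrises the arc from $\psi(0)$ to $\psi(T)$; every interior point of that arc is a cut point, so $\psi|_{[T,2T]}$ must pass through it; and injectivity of $Y$ --- the missing hypothesis --- forces $\psi|_{[T,2T]}$ to be the reversed arc, which is what yields the monotone reparametrisation $\sigma$ with $X_s=Y_{\sigma(s)}$ in all coordinates. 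Your final paragraph asserts precisely this collapse for paths ``injective on each half,'' but you only establish injectivity on one half; adding the hypothesis on $\mathbb{Y}$ and moving the pairing from $\mathbb{R}$ to $\tau$ closes the argument.
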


\section{Framework}\label{sec:framework}

\subsection{Notation}

Given a continuous path $Z\in C([0, T]; \mathbb R)$, denote its \textit{augmentation} by the continuous path $\widehat Z\in C([0, T]; \mathbb R_+\times \mathbb R)$ defined by $\widehat Z_t := (t, Z_t) \in \mathbb R_+ \times \mathbb R$.

Let $p \geq 1$. Define:
$$\widehat{\Omega}_T^p := \overline{\{ \mathbb{\widehat Z}\in G\Omega_p([0, T]; \mathbb R^2)\;\mid\;Z\in C^\infty([0, T]; \mathbb R)\mbox{ and }Z_0=1\}}^{d_{p-var}},$$
where the closure is taken under $d_{p-var}$, i.e. the $p$-variation distance (see \cite[Definition 1.5]{lyonsbook}). Given $\widehat{\mathbb Z}\in \widehat{\Omega}_T^p$, we will write by $Z\in C([0, T]; \mathbb R)$ the unaugmented coordinate process.

Intuitively, elements of $\widehat{\Omega}_T^p$ are signatures of paths of the form $(t, Z_t)$, with initial point $Z_0=1$. Because the first dimension of this augmented path (namely, time) is monotone increasing, and because we are only considering paths that start at 1, it follows by Corollary \ref{cor:uniqueness} that $\widehat{\mathbb Z}_{0,T}^{<\infty}$ completely characterises $\widehat{\mathbb Z}$ (and hence $Z$).

\subsection{The market}

The space $\widehat \Omega_T^p$ will be our space of market paths. We will equip it with a probability space $( \widehat{\Omega}_T^p, \mathcal B(\widehat{\Omega}_T^p), \mathbb P)$. Given a rough path $\widehat{\mathbb X}\in \widehat{\Omega}_T^p$, the unaugmented coordinate path $X:[0, T] \to \mathbb R$ will denote the \textit{unaffected midprice} of the asset. In other words, $X$ is the midprice process of the asset if the trader does not trade on the asset.

\begin{example}
Our market framework is very general in the sense that it includes most of the examples that have been considered in the literature. In particular, our framework includes:

\begin{enumerate}
\item \textbf{Semimartingales}. In the literature \cite{cartea1, cartea2, cartea3, eyal}, the midprice process is often modelled as a semimartingale. Semimartingales can be lifted to $p$-geometric rough paths for $p\in (2, 3)$ \cite{lyonsoriginal, semimartingalesgeometric, frizvictoir}, and therefore they fit into our framework: the market would be given by the probability space $(\widehat{\Omega}_T^p, \mathcal B(\widehat{\Omega}_T^p), \mathbb P)$ for $p\in (2,3)$ and $\mathbb P$ the law of the semimartingale.
\item \textbf{L\'evy processes}. More generally, certain L\'evy processes can also be lifted into $p$-geometric rough paths \cite{levy, levy2} and they are thus included in this framework.
\item \textbf{Fractional brownian motion}. Our framework also includes the setting where the midprice is modelled by a fractional brownian motion with Hurst parameter $H\geq 1/4$. Indeed, it was shown in \cite{fractionalbmgeometric} that fractional Brownian motions with Hurst parameter greater than 1/4 can be lifted to geometric rough paths.
\end{enumerate}
\end{example}

\subsection{Trading speeds}

In this section, we will introduce the notation of \textit{trading speeds}.

\begin{definition}[Trading speeds]
Define the metrizable space $\Lambda_T := \bigcup_{t\in [0,T]} \widehat{\Omega}_t^p$. We define the space of trading speeds by $\mathcal T:=C(\Lambda_T; \mathbb R)$. Given a trading speed $\theta\in \mathcal T$, the trader will trade a rate of $\theta(\widehat{\mathbb X}|_{[0,t]})$.
\end{definition}
Intuitively, the trader that is sitting at time $t\in [0, T]$ should decide how much to sell or buy by only considering what happened up to time $t $: she can only act based on the past, not the future. In other words, the trader's trading decision will be a (non-anticipative) function of the midprice process up to time $t$, i.e. $\widehat{\mathbb X}|_{[0,t]}\in \Lambda_T$. This intuition is incorporated into the definition of the trading speeds $\mathcal T$. A space similar to $\Lambda_T$ was considered in \cite{lambdaspace1, lambdaspace2, lambdaspace3, lambdaspace4, lambdaspace5, lambdaspace6}, and a similar definition of trading strategies was considered in \cite{lambdaspace6}.

In this paper, the following class of trading speeds will have a special relevance:

\begin{definition}[Signature trading speeds]
The space of signature trading speeds $\mathcal T_{sig}\subset \mathcal T$ is defined by $$\mathcal T_{sig} := \{ \theta \in \mathcal T \; \mid \; \exists \ell\in T((\mathbb R^2)^\ast) \mbox { such that }\theta(\widehat{\mathbb X}|_{[0, t]}) = \langle \ell, \widehat{\mathbb X}_{0, t}^{<\infty}\rangle\; \forall\, \widehat{\mathbb X}|_{[0, t]} \in \Lambda_T\}$$ where $\widehat{\mathbb X}_{0,t}^{<\infty}$ denotes the signature of $\widehat{\mathbb X}$ over the interval $[0, t]$.
\end{definition}
It turns out that the space of signature trading speeds $\mathcal T_{sig} \subset \mathcal T$ is very large -- in fact, we have the following density result, whose proof is in Appendix \ref{appendix:proofs}.

\begin{lemma}\label{lemma:density}
Let $\varepsilon > 0$. Then, there exists a compact set $\mathcal K\subset \widehat \Omega_T^p$ such that:

\begin{enumerate}
\item $\mathbb P[\mathcal K] > 1-\varepsilon$.
\item $\mathcal T_{sig}$, restricted to $\mathcal K$, is dense in $\mathcal T$.
\end{enumerate}
\end{lemma}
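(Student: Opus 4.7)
My plan is to apply the Stone--Weierstrass theorem on a compact restriction of $\Lambda_T$, with the compact subspace produced by tightness of the law $\mathbb P$ on $\widehat \Omega_T^p$. The two algebraic inputs are the shuffle product property (for multiplicative closure) and the uniqueness of signatures in Corollary \ref{cor:uniqueness} (for point separation); the role of $\mathcal K$ is to pass from the non-compact $\Lambda_T$ to a compact subspace on which Stone--Weierstrass applies.

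For the construction, $\widehat \Omega_T^p$ is a Polish space under $d_{p\text{-var}}$, so $\mathbb P$ is Radon and inner regularity provides a compact $\mathcal K \subset \widehat \Omega_T^p$ with $\mathbb P[\mathcal K] > 1 - \varepsilon$. I set
\[
\Lambda_{\mathcal K} := \{\widehat{\mathbb X}|_{[0,t]} \; : \; \widehat{\mathbb X} \in \mathcal K, \, t \in [0,T]\} \subset \Lambda_T,
\]
which is compact as the continuous image of $\mathcal K \times [0,T]$ under the restriction map $(\widehat{\mathbb X}, t) \mapsto \widehat{\mathbb X}|_{[0,t]}$. I then verify that $\mathcal A := \mathcal T_{sig}|_{\Lambda_{\mathcal K}}$ is a unital subalgebra of $C(\Lambda_{\mathcal K}; \mathbb R)$ that separates points: linearity is immediate; the shuffle identity $\langle \ell_1, \widehat{\mathbb X}_{0,t}^{<\infty}\rangle \langle \ell_2, \widehat{\mathbb X}_{0,t}^{<\infty}\rangle = \langle \ell_1 \shuffle \ell_2, \widehat{\mathbb X}_{0,t}^{<\infty}\rangle$ gives multiplicative closure; the constant $1$ arises from the empty word $\word{\varnothing}$. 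For separation, two restrictions over the same interval $[0, s]$ are distinguished because the time coordinate is strictly monotone with fixed initial value $(0,1)$, so Corollary \ref{cor:uniqueness} forces distinct signatures and hence some linear functional separates them; two restrictions over distinct intervals $[0, s]$ and $[0, t]$ with $s \neq t$ are distinguished by the functional $\word{1}$, which returns the elapsed time. Stone--Weierstrass then yields density of $\mathcal A$ in $C(\Lambda_{\mathcal K}; \mathbb R)$, and since any $\theta \in \mathcal T$ restricts to $\Lambda_{\mathcal K}$ (with Tietze supplying the converse extension when needed), this gives the desired density statement.

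The main obstacle I anticipate is not the algebraic content, which is clean once the shuffle identity and signature uniqueness are in hand, but the topological bookkeeping around $\Lambda_T = \bigcup_{t \in [0,T]} \widehat \Omega_t^p$: one must make precise the metrizable topology on this disjoint union, verify continuity of the restriction map $(\widehat{\mathbb X}, t) \mapsto \widehat{\mathbb X}|_{[0,t]}$ into it, and ensure that separation of points is handled uniformly across time slices, i.e.\ that a single linear functional $\ell \in T((\mathbb R^2)^\ast)$ can be evaluated coherently against signatures of paths of different lengths. Once these topological details are fixed, the proof reduces to a routine application of Stone--Weierstrass.
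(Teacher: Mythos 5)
Your proposal is correct and follows essentially the same route as the paper: tightness of $\mathbb P$ to produce the compact set, the shuffle product property for multiplicative closure, the empty word for constants, Corollary \ref{cor:uniqueness} for point separation, and Stone--Weierstrass to conclude. If anything you are slightly more careful than the paper's own proof, which applies Stone--Weierstrass on $\mathcal K$ directly and does not spell out the compactness of the restricted space $\Lambda_{\mathcal K}$ or the separation of restrictions taken over different time intervals.
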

Therefore, trading speeds can be locally approximated arbitrarily well by signature trading speeds. Hence, if one wants to optimise a certain objective function over $\mathcal T$, it makes sense to optimise it over $\mathcal T_{sig}$ instead. This is precisely the approach that will be followed in this paper: we will look for an optimal trading speed in $\mathcal T_{sig}$, instead of $\mathcal T$.

\subsection{Market impact}\label{subsec:market impact}
When a trader buys or sells a traded asset, the mere act of trading will affect the asset's order book. If the volume she trades is small compared to the overall volume, this effect may be neglected. However, if the trader sends large trading orders the impact on the order book may negatively affect the price at which the order is executed (see \cite{marketimpact} and the references therein). In this section we will introduce the market impact model that will be used in this paper.

If the trader decides to follow a signature trading speed $\theta\in \mathcal T_{sig}$, the \textit{execution price} -- i.e. the price the trader has access to -- will be given by
\begin{equation}\label{eq:execution price}
P_t^\theta := X_t - \langle g^\theta, \widehat{\mathbb X}^{<\infty}_{0, t}\rangle,
\end{equation} where $g^\theta\in T((\mathbb R^2)^\ast)$ is a linear functional that depends on $\theta$ that models the market impact.

\begin{example}\label{example: impact}
The definition of the market impact, far from being restrictive, is very general and includes many examples that have been studied in the literature. Indeed, let $\ell \in T((\mathbb R^2)^\ast)$ and set the signature trading speed $\theta (\widehat{\mathbb X}|_{[0, t]}) := \langle \ell, \widehat{\mathbb X}_{0, t}\rangle $. Then, the following are examples of market impacts included in our framework:

\begin{enumerate}
\item \textbf{Temporary market impact}. Set $g^\ell := \lambda\ell$, with $\lambda>0$. Then, $\langle g^\ell, \widehat{\mathbb X}_{0,t}^{<\infty}\rangle = \lambda\theta(\widehat{\mathbb X}|_{[0, t]})$ is the linear temporary market impact studied in \cite{cartea1, cartea2, eyal}. We may also make the temporary market impact nonlinear by considering a polynomial $Q\in \mathbb R[x]$ and setting $g^\ell := Q^{\shuffle}(\ell)$. Then, $\langle g^\ell, \widehat{\mathbb X}_{0,t}^{<\infty}\rangle = Q(\theta(\widehat{\mathbb X}|_{[0, t]}))$.
\item \textbf{Permanent market impact.} In \cite{cartea1, cartea2, cartea3}, a permanent market impact given by $\int_0^t \theta_s ds$ is considered. Setting $g^\ell := \ell \word{1}$, we have $\langle g^\ell, \widehat{\mathbb X}_{0,t}^{<\infty}\rangle = \langle \ell \word{1}, \widehat{\mathbb X}_{0,t}^{<\infty}\rangle =\int_0^t \langle \ell, \widehat{\mathbb X}|_{[0, s]}\rangle ds = \int_0^t \theta(\widehat{\mathbb X}|_{[0, s]})ds$.
\item \textbf{Transient market impact}. In \cite{transient1, transient2, transient3} the authors considered a transient market impact that is given by $\int_0^t K(t-s)\theta_s ds$, where $K(x):= \exp(-\rho x)$ for $\rho>0$ constant. Then, we can find $g^\ell\in T((\mathbb R^2)^\ast)$ such that $$\int_0^t K(t-s)\theta_sds \approx \langle g^\ell, \widehat{\mathbb X}_{0,t}\rangle$$ to arbitrary accuracy.

\item More generally, market impacts modelled by functions of the form $G(\theta, X)$ can be well-approximated by linear functions on the signature, and they are therefore included in our framework.
\end{enumerate}

\end{example}

\subsection{Optimal execution problem}
Suppose the trader wishes to liquidate $q_0>0$ units of the asset by time $T$. If $q_0$ is large compared to the traded volume, the trading activity will affect the price of the asset (\cite{marketimpact}) negatively for the trader. Therefore, it may be more beneficial to spread the trading activity over the interval $[0, T]$ to avoid the undesired market impact. In this case, however, the trader will be exposed to market fluctuations that may affect her adversely. Hence, the task is to find a suitable trading speed to liquidate the inventory $q_0$ which accounts for this trade-off. We will now introduce the optimal execution problem that will be studied in this paper.

\begin{definition}\label{def:wealth inventory cost} The wealth corresponding to the trading speed $\theta\in \mathcal T$ is defined by

$$W_t^\theta := \int_0^t P_s^\theta\theta(\widehat{\mathbb X}|_{[0,s]}) ds.$$ On the other hand, the remaining inventory is defined by $$Q_t^\theta := q_0 - \int_0^t \theta(\widehat{\mathbb X}|_{[0, s]})ds$$ where $q_0>0$ is the initial inventory. We define the cost function $\mathcal C^\theta:\widehat \Omega_T \to \mathbb R$ by
\begin{equation}\label{eq:cost}
\mathcal C^\theta(\widehat{\mathbb X}) :=W_T^\theta - \phi \int_0^T                                                                                               (Q_t^\theta)^2 dt + Q_T^\theta (P_T^\theta - \alpha Q_T^\theta)\end{equation}
with $\alpha, \phi \geq 0$ constants.

\end{definition}
In this paper we will study the following optimal execution problem given by the optimisation problem

\begin{equation}\label{optimal execution problem}
\sup_{\theta\in \mathcal T} \mathbb E[\mathcal C^\theta (\widehat {\mathbb X})].
\end{equation}
The first term of the cost function indicates that, in principle, the trader would like to maximise the wealth acquired by following the trading strategy $\theta$. If the investor arrives the terminal time with a non-zero inventory $Q_T^\theta$, the third term of the cost function $Q_T^\theta (P_T^\theta - \alpha Q_T^\theta)$ ensures that these leftovers are executed with a penalisation $\alpha>0$. Finally, the term $- \phi \int_0^T                                                                                               (Q_t^\theta)^2 dt$ penalises holding inventory for a long time. There are different interpretations for this term. For instance, this running inventory penalty could be seen as an \textit{urgency term}. Another interpretation comes from the setting where the investor would like to account for model uncertainty: the larger $\phi$ is, the less certain the trader is about the dynamics imposed on the midprice (see \cite{cartea2, cartea4}). In any case, a large $\phi$ would increase the trading speed near the beginning, and reduce it near the end.

This particular cost function was chosen due to its popularity in the literature \cite{cartea1, eyal, cartea2, transient1, transient2, cartea3, transient3}, but the authors would like to emphasise that the methodology proposed in this paper also applies to other alternative definitions of the cost function, and we are not restricted to this particular choice of $\mathcal C^\theta$.

Properties of signatures, and the shuffle product property \eqref{eq:shuffle product property} in particular, will make finding the optimal trading speed for the optimal control problem \eqref{optimal execution problem} in the restricted space $\mathcal T_{sig}\subset \mathcal T$ easier to solve. Due to the density result stated in Lemma \ref{lemma:density}, we will restrict the space of trading speeds from $\mathcal T$ to $\mathcal T_{sig}$, so that we will solve the following problem instead:

\begin{equation}\label{optimal execution problem sigs}
\sup_{\theta\in \mathcal T_{sig}} \mathbb E[\mathcal C^\theta (\widehat {\mathbb X})].
\end{equation}

\section{Optimal execution}\label{sec:optimal execution}

The cost function \eqref{eq:cost} is a nonlinear function of the underlying price path. However, for signature trading strategies $\theta\in \mathcal T_{sig}$ it turns out to be a linear function on the signature of the midprice process. This is due to the shuffle product property \eqref{eq:shuffle product property} -- each term in the cost function can be rewritten as a linear function on the signature of the midprice process.

\begin{lemma}
Let $\theta\in \mathcal T_{sig}$ be the signature trading speed given $\theta(\widehat{\mathbb X}|_{0, t}) = \langle \ell, \widehat{\mathbb X}_{0, t}^{<\infty}\rangle$, with $\ell \in T((\mathbb R^2)^\ast)$. Then, given any $\widehat{\mathbb X}\in \widehat \Omega_T^p$ and $t\in [0, T]$, we have

\begin{enumerate}
\item $W_t^\ell = \left \langle \left ((\word{2} + \word{\varnothing} - g^\ell)\shuffle \ell \right )\word{1}, \widehat{\mathbb X}_{0,t}^{<\infty}\right \rangle$.
\item $Q_t^\ell = \langle q_0 \word{\varnothing} - \ell \word{1}, \widehat{\mathbb X}_{0, t}^{<\infty}\rangle.$
\item $\int_0^t (Q_s^\ell)^2 ds = \langle (q_0 \word{\varnothing} - \ell \word{1})^{\shuffle 2} \word{1}, \widehat{\mathbb X}_{0, t}^{<\infty}\rangle.$
\item $Q_t^\ell(P_t^\ell - \alpha Q_t^\ell) = \langle (q_0\word{\varnothing} - \ell \word{1})\shuffle (\word{2} + \word{\varnothing} - g^\ell) - \alpha (q_0 \word{\varnothing} - \ell \word{1})^{\shuffle 2}, \widehat{\mathbb X}_{0, t}^{<\infty}\rangle$.
\end{enumerate}

\end{lemma}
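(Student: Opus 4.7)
The plan is to prove the four identities in the stated order, each following from a direct manipulation of the defining expressions combined with two basic tools: the ``integration'' rule $\langle \ell \word{1}, \widehat{\mathbb X}_{0,t}^{<\infty}\rangle = \int_0^t \langle \ell, \widehat{\mathbb X}_{0,s}^{<\infty}\rangle ds$ (which holds because the first coordinate of $\widehat X$ is time, so $d\widehat X^1_s = ds$), and the shuffle product property \eqref{eq:shuffle product property}. Throughout, I will use the normalization $X_0 = 1$ encoded in the definition of $\widehat \Omega_T^p$, which allows me to write $X_s = \langle \word{\varnothing} + \word{2}, \widehat{\mathbb X}_{0,s}^{<\infty}\rangle$, since $\langle \word{\varnothing},\widehat{\mathbb X}_{0,s}^{<\infty}\rangle = 1$ and $\langle \word{2}, \widehat{\mathbb X}_{0,s}^{<\infty}\rangle = X_s - X_0 = X_s - 1$.

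First I would prove (2), which is the simplest. Starting from $Q_t^\ell = q_0 - \int_0^t \langle \ell, \widehat{\mathbb X}_{0,s}^{<\infty}\rangle ds$, I observe $q_0 = \langle q_0 \word{\varnothing}, \widehat{\mathbb X}_{0,t}^{<\infty}\rangle$ and apply the time-integration rule to the second term to get $\int_0^t \langle \ell, \widehat{\mathbb X}_{0,s}^{<\infty}\rangle ds = \langle \ell \word{1}, \widehat{\mathbb X}_{0,t}^{<\infty}\rangle$, then combine by linearity of $\langle \cdot, \widehat{\mathbb X}_{0,t}^{<\infty}\rangle$.

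Next, (1) follows by expanding $P_s^\ell = \langle \word{\varnothing} + \word{2} - g^\ell, \widehat{\mathbb X}_{0,s}^{<\infty}\rangle$ (using the formula above for $X_s$ and the definition \eqref{eq:execution price} of $P_s^\ell$), multiplying by $\theta(\widehat{\mathbb X}|_{[0,s]}) = \langle \ell, \widehat{\mathbb X}_{0,s}^{<\infty}\rangle$, and invoking the shuffle product property to obtain
\[ P_s^\ell \theta(\widehat{\mathbb X}|_{[0,s]}) = \langle (\word{\varnothing} + \word{2} - g^\ell) \shuffle \ell, \widehat{\mathbb X}_{0,s}^{<\infty}\rangle. \]
One more application of the time-integration rule then gives (1). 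For (3), I square the expression from (2) and again invoke the shuffle property to write $(Q_s^\ell)^2 = \langle (q_0 \word{\varnothing} - \ell \word{1})^{\shuffle 2}, \widehat{\mathbb X}_{0,s}^{<\infty}\rangle$, and integrate in time as before. Finally, (4) is obtained by combining (2) with the expansion of $P_t^\ell$ used in (1), applying the shuffle property to the product $Q_t^\ell P_t^\ell$, subtracting $\alpha$ times the square from (3) evaluated without the trailing $\word{1}$, and using linearity.

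The main conceptual point (rather than an obstacle, since the computations are essentially bookkeeping) is ensuring that the shuffle property \eqref{eq:shuffle product property} and the time-integration rule — both originally stated for signatures of smooth paths — apply to an arbitrary $\widehat{\mathbb X}\in \widehat\Omega_T^p$. This is handled by the density of piecewise smooth signatures in $\widehat \Omega_T^p$ together with continuity of both sides in the $p$-variation topology, which is precisely the content of the definition of geometric $p$-rough paths and the extension theorem cited after the definition of $G\Omega_p$. One must also be careful that $g^\ell$ depends on $\ell$ (as emphasized in the notation) so no further shuffling with $\ell$ is implicit in the abbreviation $g^\ell$; the identities treat $g^\ell$ as a given element of $T((\mathbb R^2)^\ast)$.
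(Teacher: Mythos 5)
Your proposal is correct and follows essentially the same route as the paper: writing $X_s = \langle \word{2}+\word{\varnothing}, \widehat{\mathbb X}_{0,s}^{<\infty}\rangle$ using $X_0=1$, converting time integrals via $\langle \ell\word{1},\cdot\rangle$, and turning products into shuffles via \eqref{eq:shuffle product property}. The only differences are cosmetic (you prove (2) first) plus your added remark justifying the extension from smooth paths to general geometric rough paths by density and continuity, which the paper leaves implicit.
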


\begin{proof}  Let $\widehat{\mathbb X}\in \widehat \Omega_T^p$ and $t\in [0, T]$.

\begin{enumerate}
\item Notice that, because $X_0=1$, we have $X_s = \langle \word{2} + \word{\varnothing},\widehat{\mathbb X}_{0,s}^{<\infty}\rangle $ for each  $s\in [0, t]$. Then, by the shuffle product property \eqref{eq:shuffle product property},

\begin{align*}
W_t^\ell &= \int_0^t P_s^\ell \langle \ell, \widehat{\mathbb X}_{0, s}^{<\infty}\rangle ds = \int_0^t (X_s - \langle g^\ell, \widehat{\mathbb X}_{0,s}^{<\infty}\rangle )\langle \ell, \widehat{\mathbb X}_{0,s}^{<\infty}\rangle ds \\
&= \int_0^t \left \langle (\word{2} + \word{\varnothing} - g^\ell)\shuffle \ell, \widehat{\mathbb X}_{0,s}^{<\infty}\right \rangle ds = \left \langle \left ((\word{2} + \word{\varnothing} - g^\ell)\shuffle \ell \right )\word{1}, \widehat{\mathbb X}_{0,t}^{<\infty}\right \rangle.
\end{align*}
\item Follows from the fact that $\int_0^t \langle \ell, \widehat{\mathbb X}_{0, s}^{<\infty}\rangle ds = \langle \ell \word{1}, \widehat{\mathbb X}_{0, t}^{<\infty}\rangle$.
\item Using (ii),

$$\int_0^t (Q_s^\ell)^2 ds = \int_0^t \langle (q_0 \word{\varnothing} - \ell \word{1})^{\shuffle 2}, \widehat{\mathbb X}_{0, s}^{<\infty}\rangle ds = \langle (q_0 \word{\varnothing} - \ell \word{1})^{\shuffle 2}\word 1, \widehat{\mathbb X}_{0, t}^{<\infty}\rangle.$$

\item Using (ii) again,
\begin{align*}
Q_t^\ell(P_t^\ell - \alpha Q_t^\ell)& = \langle q_0 \word{\varnothing} - \ell \word{1}, \widehat{\mathbb X}_{0,t}^{<\infty}\rangle \langle \word 2 + \word \varnothing -  g^\ell - \alpha (q_0 \word{\varnothing} - \ell \word{1}), \widehat{\mathbb X}_{0,t}^{<\infty}\rangle\\
&=\langle (q_0\word{\varnothing} - \ell \word{1})\shuffle (\word{2} + \word{\varnothing} - g^\ell) - \alpha (q_0 \word{\varnothing} - \ell \word{1})^{\shuffle 2}, \widehat{\mathbb X}_{0, t}^{<\infty}\rangle.
\end{align*}
\end{enumerate}

\end{proof}

Therefore, the optimal liquidation problem \eqref{optimal execution problem} is then transformed into the following problem:

\begin{proposition}\label{prop:cost with signatures}
Let $\theta\in \mathcal T_{sig}$ be the signature trading speed given $\theta(\widehat{\mathbb X}|_{0, t}) = \langle \ell, \widehat{\mathbb X}_{0, t}^{<\infty}\rangle$, with $\ell \in T((\mathbb R^2)^\ast)$. Then, given any $\widehat{\mathbb X}\in \widehat \Omega_T^p$ and $t\in [0, T]$, the cost function can be written as \begin{align*}
\mathcal C^\theta(\widehat{\mathbb X}) &= \left \langle \left ((\word{2} + \word{\varnothing} - g^\ell)\shuffle \ell \right )\word{1} - (q_0 \word{\varnothing} - \ell \word{1})^{\shuffle 2}( \phi\word{1} + \alpha \word \varnothing) + (q_0\word{\varnothing} - \ell \word{1})\shuffle (\word{2} + \word{\varnothing} - g^\ell), \widehat{\mathbb X}_{0, T}^{<\infty}\right \rangle.
\end{align*} Therefore, the optimal liquidation problem \eqref{optimal execution problem} is reduced to

\begin{align}\label{eq:reduced}
\sup_{\ell\in T((\mathbb R^2)^\ast)}  \big \langle \left ((\word{2} + \word{\varnothing} - g^\ell)\shuffle \ell \right )\word{1} - (q_0 \word{\varnothing} - \ell \word{1})^{\shuffle 2}( \phi\word{1} + \alpha \word \varnothing) \\
+(q_0\word{\varnothing} - \ell \word{1})\shuffle (\word{2} + \word{\varnothing} - g^\ell), \mathbb E\left [ \widehat{\mathbb X}_{0, T}^{<\infty}\right ] \big  \rangle.\notag
\end{align}

\end{proposition}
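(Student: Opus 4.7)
The plan is to combine the four identities established in the preceding lemma, collect terms, and then invoke linearity of expectation to pass the $\mathbb{E}$ inside the pairing with the signature.

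First I would substitute each item of the lemma directly into the definition of the cost function
\[
\mathcal C^\theta(\widehat{\mathbb X}) = W_T^\ell - \phi \int_0^T (Q_t^\ell)^2\,dt + Q_T^\ell(P_T^\ell - \alpha Q_T^\ell),
\]
yielding a sum of four pairings $\langle \cdot, \widehat{\mathbb X}_{0,T}^{<\infty}\rangle$. Using linearity of the pairing, I would merge them into a single pairing whose first argument is the sum of the four words. The only nontrivial bookkeeping is to collect the two contributions that involve $(q_0\word{\varnothing} - \ell\word{1})^{\shuffle 2}$: the $\phi$-term from (iii) contributes $-\phi\,(q_0\word{\varnothing} - \ell\word{1})^{\shuffle 2}\word{1}$, and the $\alpha$-term from (iv) contributes $-\alpha\,(q_0\word{\varnothing} - \ell\word{1})^{\shuffle 2}$. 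Factoring the common shuffle square on the left (using the associativity/bilinearity of concatenation and $\shuffle$), these combine into $-(q_0\word{\varnothing} - \ell\word{1})^{\shuffle 2}(\phi\word{1} + \alpha\word{\varnothing})$, which matches the displayed formula exactly. The $W_T^\ell$ term and the bilinear $Q_T^\ell P_T^\ell$ part of (iv) are carried over unchanged.

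Once the pointwise identity for $\mathcal C^\theta(\widehat{\mathbb X})$ as a single linear functional evaluated on $\widehat{\mathbb X}_{0,T}^{<\infty}$ is in hand, taking expectation and using the linearity of $\langle \cdot, \cdot\rangle$ in the right-hand slot gives $\mathbb E[\mathcal C^\theta(\widehat{\mathbb X})] = \langle L(\ell), \mathbb E[\widehat{\mathbb X}_{0,T}^{<\infty}]\rangle$, where $L(\ell)$ is the explicit element of $T((\mathbb R^2)^\ast)$ displayed in the proposition. The reduction of the original supremum over $\mathcal T_{sig}$ to a supremum over $\ell\in T((\mathbb R^2)^\ast)$ is then immediate from the one-to-one correspondence between $\ell$ and the signature trading speed $\theta(\widehat{\mathbb X}|_{[0,t]}) = \langle \ell, \widehat{\mathbb X}_{0,t}^{<\infty}\rangle$.

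The main obstacle is not conceptual but notational: one must be careful that concatenation by $\word 1$ distributes correctly over $\shuffle$-squares and that the sign/coefficient bookkeeping when factoring out $(q_0\word{\varnothing} - \ell\word{1})^{\shuffle 2}$ is done in $T((\mathbb R^2)^\ast)$ rather than after pairing with a specific signature. A secondary technical point is the interchange of $\mathbb E$ with the pairing; since $L(\ell)$ has only finitely many nonzero graded components (for each fixed $\ell$ of finite degree), the pairing reduces to a finite linear combination of coordinates of $\widehat{\mathbb X}_{0,T}^{<\infty}$, and the interchange is valid as soon as those finitely many coordinates are integrable under $\mathbb P$, which is an implicit standing assumption on the market model.
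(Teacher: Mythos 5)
Your proposal is correct and matches the paper's (implicit) argument exactly: the paper offers no separate proof of Proposition~\ref{prop:cost with signatures} precisely because it follows by substituting the four identities of the preceding lemma into the definition of $\mathcal C^\theta$, collecting the $\phi$- and $\alpha$-terms into the common factor $(q_0\word{\varnothing}-\ell\word{1})^{\shuffle 2}(\phi\word{1}+\alpha\word{\varnothing})$, and moving the expectation inside the (linear) pairing. Your added remark on the integrability needed to justify the interchange of $\mathbb{E}$ with the pairing is a sensible technical point that the paper leaves implicit.
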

The cost function $\mathcal C^\theta(\widehat{\mathbb X})$ depends on two aspects: a stochastic component and the control $\theta$. Moreover, this dependency is nonlinear. Proposition \ref{prop:cost with signatures} separates this dependency into a deterministic component that solely depends on the control, and on a stochastic component that does not depend on the control. Moreover, because this separation makes the cost function linear on the path, the expectation in \eqref{eq:cost} is moved inside linear functional -- in other words, the resulting optimisation problem \eqref{eq:reduced} depends on the expected signature of the midprice process.

The expected signature of the midprice process is the only dependency on the stochastic process. This object plays the analogous role of the moments of a random variable, but on path space. It was shown in fact in \cite{chevyrev2016characteristic} that under certain growth assumptions, the expected signature determines the law of the stochastic process. Therefore, the fact that \eqref{eq:reduced} depends on the expected signature of the midprice process essentially implies that the optimisation problem depends on the entire law of the process.

\subsection{Numerically solving the optimal execution problem}

\begin{figure}
\centering
\includegraphics[width=0.7\linewidth]{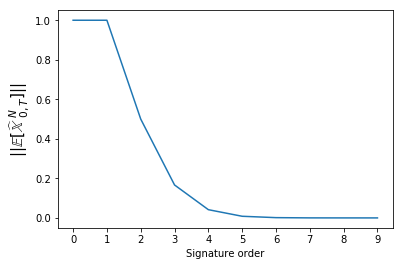}
\caption{$\left \lVert \mathbb E\left [ \widehat{\mathbb X}_{0,T}^{N}\right ]\right \rVert$ as a function of $N$ in the case where the midprice process is a Brownian motion. The factorial decay of the signature makes higher order terms small compared to the first few terms.}
\label{fig:signature decay}
\end{figure}

The optimisation problem \eqref{eq:reduced} from Proposition \ref{prop:cost with signatures} involves the full expected signature $\mathbb E\left [ \widehat{\mathbb X}_{0,T}^{<\infty}\right ]$. In practice, however, one has to consider the truncated expected signature of order $N\in \mathbb N$, i.e. $\mathbb E\left [ \widehat{\mathbb X}_{0,T}^{\leq N}\right ]$.

However, the fast decay of the signature -- it decays factorially -- implies that the first few terms will dominate the rest, and not much information will be lost in the truncation. As a consequence, the expected signature typically also decays factorially (\cite{chevyrev2016characteristic} for instance showed this fact for wide classes of L\'evy, Markov and Gaussian processes).

Figure \ref{fig:signature decay} shows $\left \lVert \mathbb E\left [ \widehat{\mathbb X}_{0,T}^{N}\right ]\right \rVert$ plotted against $N$ in the case where the midprice process $X$ is a Brownian motion. As we see, the factorial decay makes higher order terms small compared to the first few terms. Therefore, in practice one doesn't need to consider truncations of very high order.

Once the signature is truncated at a certain level $N\in \mathbb N$, the optimisation problem \eqref{eq:reduced} consists of finding the global maximum of a certain polynomial in several variables. For example, one can show that if a linear permanent and temporary market impact is considered, the polynomial is a quadratic polynomial, and finding the optimal trading speed will be reduced to finding the (unique) global maximum of a quadratic polynomial in several variables.

Regarding the computation of the truncated expected signature, Monte Carlo methods can be used for this task. Therefore, the only knowledge about the midprice process that is needed to solve the optimal execution problem is how to sample from the path. The signature of a single realisation can be computed using publicly available software such as \texttt{esig}\footnote{\url{https://pypi.org/project/esig/}} or \texttt{iisignature}\footnote{\url{https://pypi.org/project/iisignature/}}. 

\section{Extensions}\label{sec:extensions}

In Section \ref{sec:optimal execution}, we studied a certain optimal liquidation problem. In the present section we analyse different extensions of the problem, and we study how they fit in our framework.

\subsection{Modelling the execution price with exogeneous information}\label{sec:exogeneous information}

For $\theta\in \mathcal T_{sig}$, in Section \ref{subsec:market impact} the market impact was defined as a function of the trading speed and the unaffected midprocess:
\begin{equation}\label{eq:exogeneous information original}
P_t^\theta := X_t - \langle g^\theta, \widehat{\mathbb X}^{<\infty}_{0, t}\rangle,\quad \mbox{with }g^\theta\in T((\mathbb R^2)^\ast).
\end{equation}
However, there are other factors that affect the impact of a trading order \cite{exogeneousimpact1, exogeneousimpact2}. For instance, one may want to incorporate the total traded volume $V:[0, T] \to \mathbb R$ in the market impact \cite{exogeneousimpact2}. Moreover, correlation and cross-asset impact between similar assets will also play a role: the execution price of an order may depend on the midprice process of other assets \cite{exogeneousimpact1, crossasset, crossasset2}.

This feature can be incorporated to our framework, by modelling the execution price by
\begin{equation}\label{eq:exogeneous information}
P_t^\theta := \langle f^\theta, \widehat{\mathbb Z}_{0,t}^{<\infty}\rangle,\quad \mbox{with }f^\theta\in T((\mathbb R^{n+3})^\ast)
\end{equation}
where $\widehat{\mathbb Z}_{0,t}^{<\infty}$ is the signature of $\widehat{Z}_t := (t, X_t, V_t, Y_t^1, \ldots, Y_t^n)\in \mathbb R^{n+3}$, with $V_t$ the total traded volume up to time $t$ and $Y_t^1, \ldots, Y_t^n$ are the midprice processes of $n$ alternative assets that the trader believes that affect the execution price of the main asset. Notice that \eqref{eq:exogeneous information original} is a particular case of \eqref{eq:exogeneous information}. Other exogenous information may also be added to $\widehat Z$.

The methodology proposed in this paper will then apply to this setting: the optimisation problem \eqref{optimal execution problem}, for the new definition of market impact, will be reduced to an optimisation problem similar  to \eqref{eq:reduced}, namely
\begin{equation}\label{eq:optimisation exogeneous information}
\sup_{\ell\in T((\mathbb R^2)^\ast)}  \big \langle \left (f^\ell\shuffle \ell \right )\word{1} - (q_0 \word{\varnothing} - \ell \word{1})^{\shuffle 2}( \phi\word{1} + \alpha \word \varnothing) +(q_0\word{\varnothing} - \ell \word{1})\shuffle f^\ell, \mathbb E\left [ \widehat{\mathbb Z}_{0, T}^{<\infty}\right ] \big  \rangle.
\end{equation}

\subsection{Optimal trading, as opposed to liquidation}

In this paper we have been focusing on the case where a trader has an initial inventory at $t=0$, and she would like to get rid of it by time $t=T$. However, certain high-frequency traders may be interested in the following alternative question: if one starts with no inventory at $t=0$ and one would like to finish with no inventory at $t=T$, what is the best trading strategy that can be followed on $[0, T]$? This paper's framework can be modified for this purpose by redefining the inventory $Q_t$ in Definition \ref{def:wealth inventory cost} by setting $q_0=0$.

\subsection{Cross-asset portfolio liquidation}

The discussion on Section \ref{sec:exogeneous information} suggests another extension of the original problem studied in this paper. Suppose there are $n$ assets $Y^1, \ldots, Y^n$ and a trader has an initial portfolio $q=(q_1, \ldots, q_n)\in \mathbb R_+^n$. If the trader wishes to liquidate the inventory $q$ (see \cite{crossasset, crossasset2}), she can consider an optimal control problem similar to \eqref{eq:optimisation exogeneous information} that incorporates her risk profile.

More generally, the trader could aim to transition from a starting portfolio $q_{start}\in \mathbb R^n$ on $n$ traded assets, to a final portfolio $q_{end}\in \mathbb R^n$, and she would like to do so in an optimal way. Again, our framework can be adapted for this task.

%

\subsection{Other cost functions}

The cost function considered in \eqref{optimal execution problem} was chosen in order to be consistent with the literature \cite{cartea1, eyal, cartea2, transient1, transient2, cartea3, transient3}. However, the methodology we propose is not intrinsic to this cost function, and it can be applied to other cost functions that the trader may find more appropriate.

\section{Numerical experiments}\label{sec:numerical experiments}

In this section we implement the proposed methodology and test it on different settings. We begin by showing that, when we apply the methodology to various settings studied in the literature, we retrieve the existing results, thus reaffirming that our framework is a generalisation of many frameworks considered in the literature and validating the trading strategy returned by the signature methodology. Then, we apply our approach to new settings.

\subsection{Brownian motion with temporary and permanent market impact}\label{subsec:numerical bm}

\begin{figure}
\centering
\includegraphics[width=0.7\linewidth]{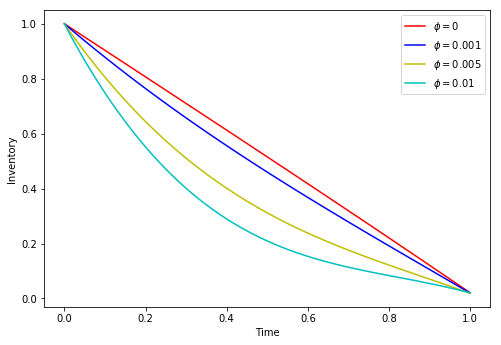}
\caption{The trader's inventory for 100 midprice path realisations and the setting considered in Section \ref{subsec:numerical bm}. Different running inventory penalties $\phi$ were considered.}
\label{fig:numerical bm}
\end{figure}

In this section we will consider the framework studied in \cite[Section 6.5]{carteabook}. We will assume that the unaffected midprice process follows a Brownian motion with volatility $\sigma$, that is, $X_t:= \sigma W_t$ with $\sigma>0$ and $W$ a Brownian motion. For a signature trading speed $\theta\in \mathcal T_{sig}$ given by $\theta(\widehat{\mathbb X}|_{[0,t]})=\langle \ell, \widehat{\mathbb X}_{0,t}^{<\infty}\rangle$ with $\ell\in T((\mathbb R^2)^\ast)$, the execution price will be given by a \textit{permanent} market impact and a \textit{temporary} market impact:
$$P_t^\theta := X_t - k\int_0^t \theta(\widehat{\mathbb X}|_{[0, s]})ds - \lambda \theta(\widehat{\mathbb X}|_{[0,t]}),$$ with $k,\lambda>0$.

It was mentioned in Section \ref{subsec:market impact} that this market impact is included in our framework. More specifically, we have:
$$P_t^\ell = X_t - k \int_0^t \langle \ell, \widehat{\mathbb X}_{0,s}^{<\infty}\rangle ds - \lambda \langle \ell , \widehat{\mathbb X}_{0,t}^{<\infty}\rangle = X_t - \langle g^\ell, \widehat{\mathbb X}_{0,t}^{<\infty}\rangle$$ with $g^\ell := k \ell \word 1 + \lambda \ell$.

We may then solve \eqref{eq:reduced}. The chosen parameters were $q_0=1$, $\lambda=10^{-3}$, $k=10^{-4}$, $\alpha=10$, $\sigma=0.02$ and $T=1$, and different values for $\phi$ were considered. Truncated signatures of order 7 were considered to solve \eqref{eq:reduced}. As it has been established in the literature (see \cite[Section 6.5]{carteabook}) the optimal trading speed does not depend on the midprice. Moreover, if we set $\phi=0$ so that no running inventory penalties are considered, it is known that the optimal trading speed is constant. On the other hand, when $\phi$ is increased, the trader decides to liquidate the inventory sooner. All this features are captured in the results we obtained -- see Figure \ref{fig:numerical bm}.

\subsection{Incorporating trading signals}\label{subsec:signals}

\begin{figure}
\centering
\includegraphics[width=0.7\linewidth]{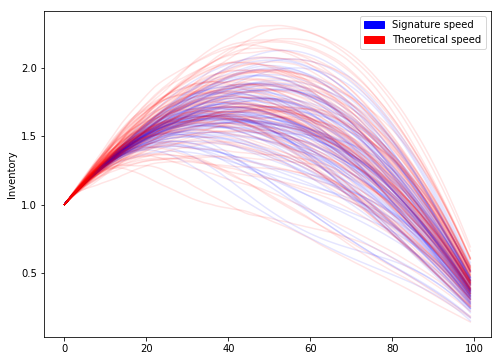}
\caption{The trader's inventory for 100 midprice path realisations and the setting considered in Section \ref{subsec:signals}, both for the theoretical optimal speed (red) and the signature trading speed (blue).}
\label{fig:numerical signal}
\end{figure}

Lehalle and Neuman in \cite{eyal} considered an optimal liquidation problem where the investor has access to some trading signal that predict short-term price movements, such as order book imbalance.

In this case, the midprice process was taken to be $X_t:=\int_0^t I_sds + \sigma W_t$, where $I$ is the signal process, $\sigma>0$ is volatility and $W$ is a Brownian motion. In the original paper \cite{eyal}, the signal $I$ that was considered was an Ornstein-Uhlenbeck process $dI_t = -\gamma I_tdt + \sigma_0dW_t$, where $\gamma,\sigma>0$ are constants. Therefore, given that the midprice process is a semimartingale, this example also falls within our framework.

The price impact that was considered in \cite{eyal} was a linear temporary price impact. Therefore, the execution price will be given by \eqref{eq:execution price}, where $g^\ell := \lambda \ell$ with $\lambda>0$.

Figure \ref{fig:numerical signal} shows the running inventory for 100 realisations of the midprice process, both for the signature trading speed and the optimal trading speed that was derived in \cite{eyal}. The chosen parameters were $q_0=1$, $\lambda=10^{-3}$, $\alpha=10^{-2}, \phi=10^{-3}, I_0=0.02$ and $\gamma=0.1$. Truncated signatures of order 9 were considered. As we see, the signature trading speed seems to be a close approximation of the theoretical optimal speed. The numerical expected cost of the signature trading speed is 1.0169981 whereas the optimal trading speed's expected cost is 1.0170877.

Notice that the presence of the signal in the midprice process introduces a positive drift, and therefore, as illustrated by Figure \ref{fig:numerical signal}, it is optimal to begin by purchasing shares in order to sell them for a profit later. This could be avoided by increasing the running inventory penalty $\phi$.

\subsection{Incorporating order-flow}\label{subsec:order flow}

\begin{figure}
\centering
\includegraphics[width=0.7\linewidth]{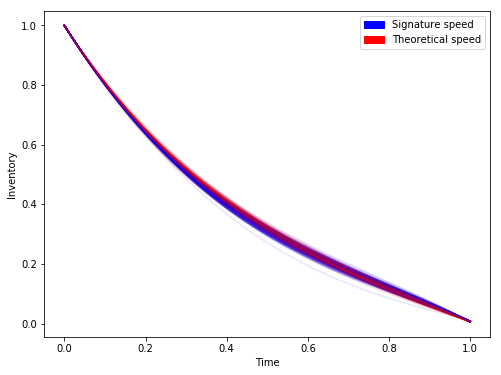}
\caption{The trader's inventory for 100 midprice path realisations and the setting considered in Section \ref{subsec:order flow}, both for the theoretical optimal speed (red) and the signature trading speed (blue).}
\label{fig:order flow}
\end{figure}

In \cite{cartea2}, the authors incorporate the order-flow of all agents into the midprice dynamics. This is done by considering the midprice process
$$X_t := k\int_0^t (\mu_s^+ - \mu_s^-)ds + \sigma W_s,$$
where $\mu_t^+$ and $\mu_t^-$ are the aggregated buying and selling orders of all market participants, respectively. These orders are assumed to follow the dynamics
$$d\mu_t^\pm =-\kappa\mu_t^\pm dt + \eta_{1+L_{t-}^\pm}^\pm dL_t^\pm$$
with $L_t^\pm$ independent Poisson processes of intensity $\lambda_0$, and $\eta_i^\pm \sim Exp(\eta_0 \kappa)$ has an exponential distribution. Moreover, a temporary market impact $\lambda \theta(\widehat{\mathbb X})$ was included as well.

Figure \ref{fig:order flow} shows the inventory for 100 realisations of the midprice path, both for the signature trading speed and the optimal trading speed that was derived in \cite{cartea2}. The expected cost function of the signature trading speed is 0.995690, very close to the expected cost of the optimal speed: 0.995722. The parameters we considered are $\lambda=5\cdot 10^{-4}$, $k=10^{-4}$, $q_0=1$, $\alpha=2$, $\phi=5\cdot 10^{-3}$, $\sigma=0.1$, $\kappa = \lambda_0=5$, $\eta_0=0.8$ and signatures of order 7.

\subsection{Fractional Brownian motion}\label{subsec:fractional bm}

\begin{figure}
\centering
\includegraphics[width=0.48\linewidth]{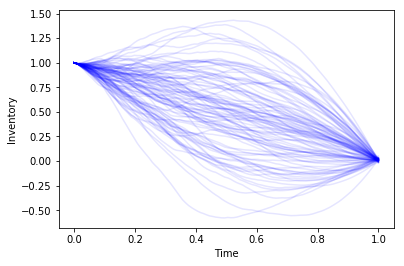}
\includegraphics[width=0.48\linewidth]{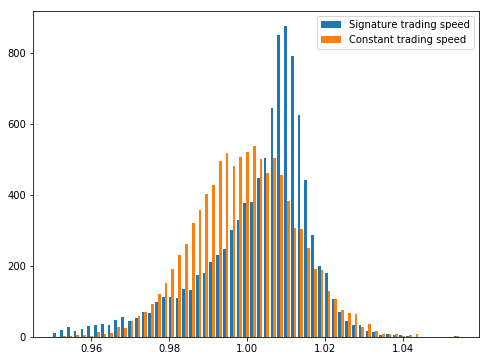}
\caption{Trader's inventory (left) and the trader's wealth distribution (right) in the case where the midprice process is a fractional Brownian motion.}
\label{fig:fbm}
\end{figure} 

In this section, we assume that the midprice process $X_t$ is a fractional Brownian motion. We assume a linear market impact. In other words, the execution price will be given by
$$P_t^\theta := \sigma W_t^H - \lambda \theta(\widehat{\mathbb X}|_{[0,t]}),$$
where $W_t^H$ is a fractional Brownian motion with Hurst parameter $H$, and $\sigma, \lambda>0$ are constants.

Figure \ref{fig:fbm} shows the midprice and inventory in the case where $H=1/3$, $\sigma=0.02$, $q_0=1$, $\phi=0$, $\lambda=10^{-3}$, $\alpha=0.1$, $T=1$ and truncated signatures of order 7 are considered.

As we see, the behaviour differs significantly from the case where $H=1/2$ (i.e. when $X_t$ is a Brownian motion). Indeed, given that we don't include a running inventory penalty as $\phi=0$, in the Brownian case we would expect the inventory $Q_t$ to be linear. However, Figure \ref{fig:fbm} illustrates that this is not the case for the fractional Brownian motion, and the trading speed depends strongly on the midprice process. In fact, if we look at the expected cost of the constant trading speed -- it is given by 0.9991335 -- we see that the signature trading speed for the fractional Brownian motion outperforms the constant trading speed strategy -- the expected cost of the signature trading speed is 1.0031300. This is outperformance of the signature trading speed is reflected in the wealth distribution of both strategies shown in Figure \ref{fig:fbm} (right).

\section{Experiments with market data}\label{sec:market data}

To solve \eqref{eq:reduced}, the only information that is needed about the midprice process is its expected signature. In this section, we use real market data to estimate the expected signature, which is then used to solve \eqref{eq:reduced}. Then, we evaluated the performance of the optimal execution strategy in an out-of-sample set of market paths.

We considered midprice market data of Apple (AAPL) for 1 year, from the \nth{1} of January 2018 to the \nth{31} of December 2018, which was obtained from LOBSTER\footnote{\url{https://lobsterdata.com/}}. This data was divided into a training set of 10 months (January--October) and an out-of-sample set of 2 months (November--December).

We considered 15 minute windows from different times of each trading day -- more specifically, we considered 10:00--10:15, 11:00--11:15, 12:00--12:15 and 13:00--13:15. We estimated the expected signature over each of these 15 minute windows by computing the empirical expectation of the signature (signatures of order 13 were considered) of the corresponding 15 minute windows from the testing set. Therefore, to some extent, we assume that the midprice process follows a similar behaviour over each of the windows throughout the trading year.

\begin{figure}
  \begin{subfigure}[b]{0.5\linewidth}
    \centering
    \includegraphics[width=0.75\linewidth]{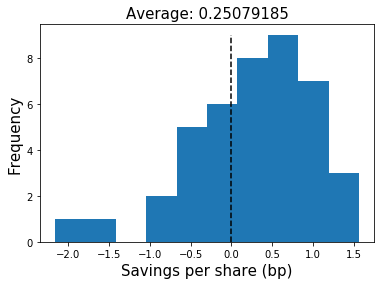} 
    \caption{10:00--10:15.} 
    \label{fig7:a} 
    \vspace{4ex}
  \end{subfigure}
  \begin{subfigure}[b]{0.5\linewidth}
    \centering
    \includegraphics[width=0.75\linewidth]{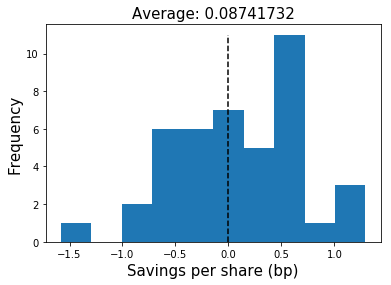} 
    \caption{11:00--11:15.} 
    \label{fig7:b} 
    \vspace{4ex}
  \end{subfigure} 
  \begin{subfigure}[b]{0.5\linewidth}
    \centering
    \includegraphics[width=0.75\linewidth]{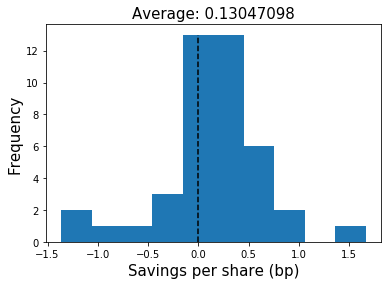} 
    \caption{12:00--12:15.} 
    \label{fig7:c} 
  \end{subfigure}
  \begin{subfigure}[b]{0.5\linewidth}
    \centering
    \includegraphics[width=0.75\linewidth]{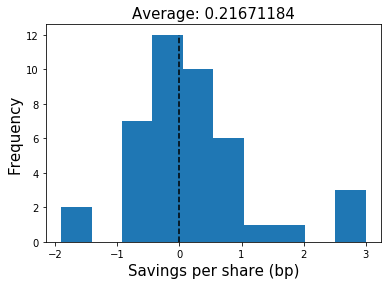} 
    \caption{13:00--13:15.} 
    \label{fig7:d} 
  \end{subfigure} 
  \caption{Out-of-sample performance of the signature approach to optimal liquidation, compared to the Almgren--Chriss benchmark. The optimal signature trading speed consistently outperforms the benchmark across all 15-minute windows.}
  \label{fig:market data} 
\end{figure}

Once the expected signature of the midprice process for each of the 15 minute windows is estimated from the training set, we solved the optimisation problem \eqref{eq:reduced} to estimate the optimal signature trading speed. We included a temporary and market impact:
$$P_t^\theta := X_t - k\int_0^t \theta(\widehat{\mathbb X}|_{[0, s]})ds - \lambda \theta(\widehat{\mathbb X}|_{[0,t]}).$$
The parameters we used where $\lambda =10^{-3}$, $k=10^{-4}$, $\alpha=0.1, \phi=10^{-4}$ and $q_0=1$.
 We then evaluated the performance on the out-of-sample set for each of the 15 minute windows. Following \cite{cartea2}, we compared the performance against the Almgren--Chriss execution strategy \cite{almgren}. More specifically, we considered the \textit{savings per share} metric (in basis points) that was used in \cite{cartea2}, which is defined by
$$\dfrac{W_T - W_T^{AC}}{W_T^{AC}} \times 10^4,$$
where $W_T$ and $W_T^{AC}$ are the terminal wealth of the optimal signature trading speed and Almgren--Chris execution strategy, respectively.

The results are shown in Figure \ref{fig:market data}. The optimal signature trading speed outperforms the Almgren--Chriss benchmark on all 15-minute windows, as on average the savings per share of the signature trading speed is positive.

Notice that the only assumption we have made is that the midprice process behave similarly on the same 15-minute window across different trading days. Other than that, our approach is model-free: we can, in a nonparametric and model-free way, estimate the optimal trading speed from market data.

\section{Conclusion}

In this paper we propose a methodology to numerically approximate the solution of certain optimal execution problems. This is done in the general framework of geometric rough paths, which in particular contains many existing models in the literature.

Rough path signatures provide a methodology to reduce the original optimisation problem into a finite-dimensional, computationally feasible optimisation problem. The only information that is needed from the underlying price process is its expected signature, which can be computed using Monte Carlo methods.

This approach was tested in Section \ref{sec:numerical experiments}, where we show that in those cases where the optimal trading speed is known, the signature-based numerical approach is capable of retrieving it. Moreover, the generality of the approach allows the estimation of the optimal trading speed in those settings where the optimal solution is unknown. In Section \ref{sec:market data} on the other hand, we showed how our methodology can be used in real market data and we demonstrated that the signature approach outperforms the Almgren--Chriss benchmark.

\appendix
\section{Proofs} \label{appendix:proofs}

\begin{proof}[Proof of Lemma \ref{lemma:density}]\label{proof:lemma density}
Let $\varepsilon>0$. Because $\widehat \Omega_T^p$ is reflexive, there exists $\mathcal K\subset \widehat\Omega_T^p$ compact such that $\mathbb P[\mathcal K]>1-\varepsilon$.

Let $\theta_1,\theta_2\in \mathcal T_{sig}$. Then, by definition there exist linear functionals $\ell_1,\ell_2\in T((\mathbb R^2)^\ast)$ such that $\theta_i(\widehat{\mathbb X}|_{[0,t]})=\langle \ell_i,\widehat{\mathbb X}_{0,t}^{<\infty}\rangle$ for all $\widehat{\mathbb X}|_{[0,t]}\in \Lambda_T, i=1,2$. Define \linebreak $\theta(\widehat{\mathbb X}|_{[0,t]}):=\langle \ell_1 \shuffle \ell_2, \widehat{\mathbb X}_{0,t}^{<\infty}\rangle$. Then, by the shuffle product property \eqref{eq:shuffle product property} we have
\begin{align*}
\theta_1(\widehat{\mathbb X}|_{[0,t]})\theta_2(\widehat{\mathbb X}|_{[0,t]}) &=\langle \ell_1, \widehat{\mathbb X}_{0,t}^{<\infty}\rangle\langle \ell_2, \widehat{\mathbb X}_{0,t}^{<\infty}\rangle\\
&=\langle \ell_1\shuffle \ell_2, \widehat{\mathbb X}_{0,t}^{<\infty}\rangle\\
&=\theta(\widehat{\mathbb X}|_{[0,t]}).
\end{align*}
Therefore, and because the sum of two signature trading speeds is trivially  a signature trading speed, $\mathcal T_{sig}$ form an algebra. On the other hand, the uniqueness of the signature (Corollary \ref{cor:uniqueness}) implies that $\mathcal T_{sig}$ separates points. Indeed, given $\widehat{\mathbb X}|_{[0,t]}, \widehat{\mathbb Y}|_{[0,t]}\in \widehat \Omega_T^p$ distinct, because we have $\widehat{\mathbb X}_{0,t}^{<\infty}\neq \widehat{\mathbb Y}_{0,t}^{<\infty}$ we immediately have that there exists $\ell\in T((\mathbb R^2)^\ast)$ such that $\langle \ell, \widehat{\mathbb X}_{0,t}^{<\infty}\rangle \neq \langle \ell, \widehat{\mathbb Y}_{0,t}^{<\infty}\rangle$. Moreover, $\mathcal T_{sig}$ contains constants, as $\langle \word \varnothing, \widehat{\mathbb X}_{0,t}^{<\infty}\rangle=1$ for all $\widehat{\mathbb X}|_{[0,t]}\in \widehat \Omega_T^p$. Therefore, by Stone--Weierstrass theorem we conclude that $\mathcal T_{sig}$, restricted to $\mathcal K$, is dense in $\mathcal T$.
\end{proof}

\section*{Acknowledgments}
This work was supported by The Alan Turing Institute under the EPSRC grant EP/N510129/1. The authors would like to thank Sebastian Jaimungal, Alvaro Cartea and Leandro Leandro Sanchez Betancourt for reading a preprint of this paper and for giving their insights.

\section*{Disclosure statement}

Opinions and estimates constitute our judgement as of the date of this Material, are for informational purposes only and are subject to change without notice. This Material is not the product of J.P. Morgans Research Department and therefore, has not been prepared in accordance with legal requirements to promote the independence of research, including but not limited to, the prohibition on the dealing ahead of the dissemination of
investment research. This Material is not intended as research, a recommendation, advice, offer or solicitation for the purchase or sale of any financial product or service, or to be used in any way for evaluating the merits of participating in any transaction. It is not a research report and is not intended as such. Past performance is not indicative of future results. Please consult your own advisors regarding legal, tax, accounting or any other aspects including suitability implications for your particular circumstances. J.P. Morgan disclaims any responsibility or liability whatsoever for the quality, accuracy or completeness of the information herein, and for any reliance on, or use of this material in any way.

Important disclosures at: \url{www.jpmorgan.com/disclosures}.

\bibliographystyle{alpha}
\bibliography{references}

\end{document}